\documentclass[letterpaper]{article} 
\usepackage[preprint]{aaai2027} 
\usepackage[hyphens]{url} 
\usepackage{graphicx} 
\usepackage{natbib} 
\usepackage{caption} 
\usepackage{booktabs}
\usepackage{amsmath,amssymb,amsthm}

\newtheorem{proposition}{Proposition}
\newtheorem{theorem}{Theorem}
\newtheorem{corollary}{Corollary}
\newcommand{\E}{\mathbb{E}}
\newcommand{\R}{\mathbb{R}}
\newcommand{\Var}{\operatorname{Var}}

\title{When Unpaired Sets Support Shared-Corruption Calibration: Moment Geometry and Two-Sample Precision}
\author{
Shuheng Cao$^{1,*}$,
Zhenhao Zhang$^{4,6,*}$,
Ruiqi Chen$^{2,*}$,
Renjie Cao$^{3,\dagger}$,
Siyu Zhang$^{1,\dagger}$,
Zhaoxiang Feng$^{1,\dagger}$,
Lingwei Dang$^{5,\dagger}$,
Haoyang Wu$^{7,\dagger}$
}

\affiliations{
\begin{tabular}{@{}c@{}}
$^{1}$University of California, San Diego
\quad
$^{2}$University of Michigan, Ann Arbor
\quad
$^{3}$Boston College
\\[0.2em]
$^{4}$ShanghaiTech University
\quad
$^{5}$South China University of Technology
\quad
$^{6}$Tsinghua University
\\[0.2em]
$^{7}$Huazhong University of Science and Technology
\\[0.45em]
{\small
$^{*}$Co-first authors
\qquad
$^{\dagger}$Equal contribution
}
\end{tabular}
}

\begin{document}

\maketitle

\begin{abstract}
Collections of diverse observations often share one acquisition, processing, geometric, or channel corruption, while only an unpaired clean reference set is available. For a prescribed low-dimensional correction shared across observations, the observed and clean reference sets support inference only through the response of fixed moments. We formulate this problem as two-sample moment calibration and report a rank-aware information state combining local rank, scaled moment sensitivity, source-separated covariance, and a moment compatibility residual. Full rank gives local moment identifiability, whereas kernel directions remain unresolved to first order. A unified linearization separates observed-set and reference-set uncertainty. Under covariance weighting, the weakest scaled singular value determines worst-direction asymptotic amplification. For an orientation-preserving planar-similarity correction shared across observations, ensemble centroids and a nonzero third-order complex moment yield closed-form global population identification of translation, rotation, and isotropic scale under matched-population and no-clipping assumptions. Controlled validation tests the predicted $N^{-1}$ and $\sigma_{\min}^{-2}$ laws, Gaussian efficiency, and interval coverage. Bounded applications report color corrected-output quality, channel magnitude-response calibration, and a separate paired geometric de-beautification result. The framework therefore reports missing or weak information instead of treating every fitted correction as identified.
\end{abstract}

\section{Introduction}

Many AI pipelines process diverse observations affected by one shared acquisition, processing, or channel corruption. Paired clean targets may be unavailable, while an independent sample from the matched clean population is accessible. Repeated action of the shared corruption enables set-level calibration. Yet additional samples sharpen only moment responses that expose a correction direction. We therefore ask when an observed set and an unpaired clean reference set support a prescribed low-dimensional correction and which directions remain unresolved.


Structured blind inverse methods recover content and operators under subspace, sparsity, architectural, or learned priors \cite{ahmed2014blind,li2016identifiability,ulyanov2018dip,chung2023dps,chung2023blinddps,li2025blindpnp}. Unpaired methods learn transformations or degradations through supervision at the distribution level \cite{zhu2017cyclegan,shaham2017mmd,meanti2025ddm}. Moment estimation and local identification theory analyze empirical restrictions and derivative rank \cite{hansen1982gmm,rothenberg1971identification,vandervaart1998asymptotic}. The gap is the absence of an operational report combining the moment compatibility residual, local rank, scaled moment sensitivity, and two-sample precision for this two-population calibration problem.

Our thesis is that an observed set and an unpaired clean reference set support a prescribed low-dimensional correction only through the response of fixed moments. We make three contributions. First, we combine the corrected-moment Jacobian, scaled moment sensitivity, source-separated covariance, and moment compatibility residual in a rank-aware information state (Figure~\ref{fig:framework}). Controlled validation tests its rank, conditioning, and two-sample precision laws. Second, an analytic specialization gives closed-form global population identification for an orientation-preserving planar-similarity correction shared across observations under matched-population, no-clipping, and nonzero-moment assumptions. Third, bounded applications report color corrected-output quality, channel magnitude-response calibration, and paired geometric de-beautification. The paired application is not used to validate the unpaired planar-similarity result.

\begin{figure*}[t]
\centering
\includegraphics[width=0.98\textwidth]{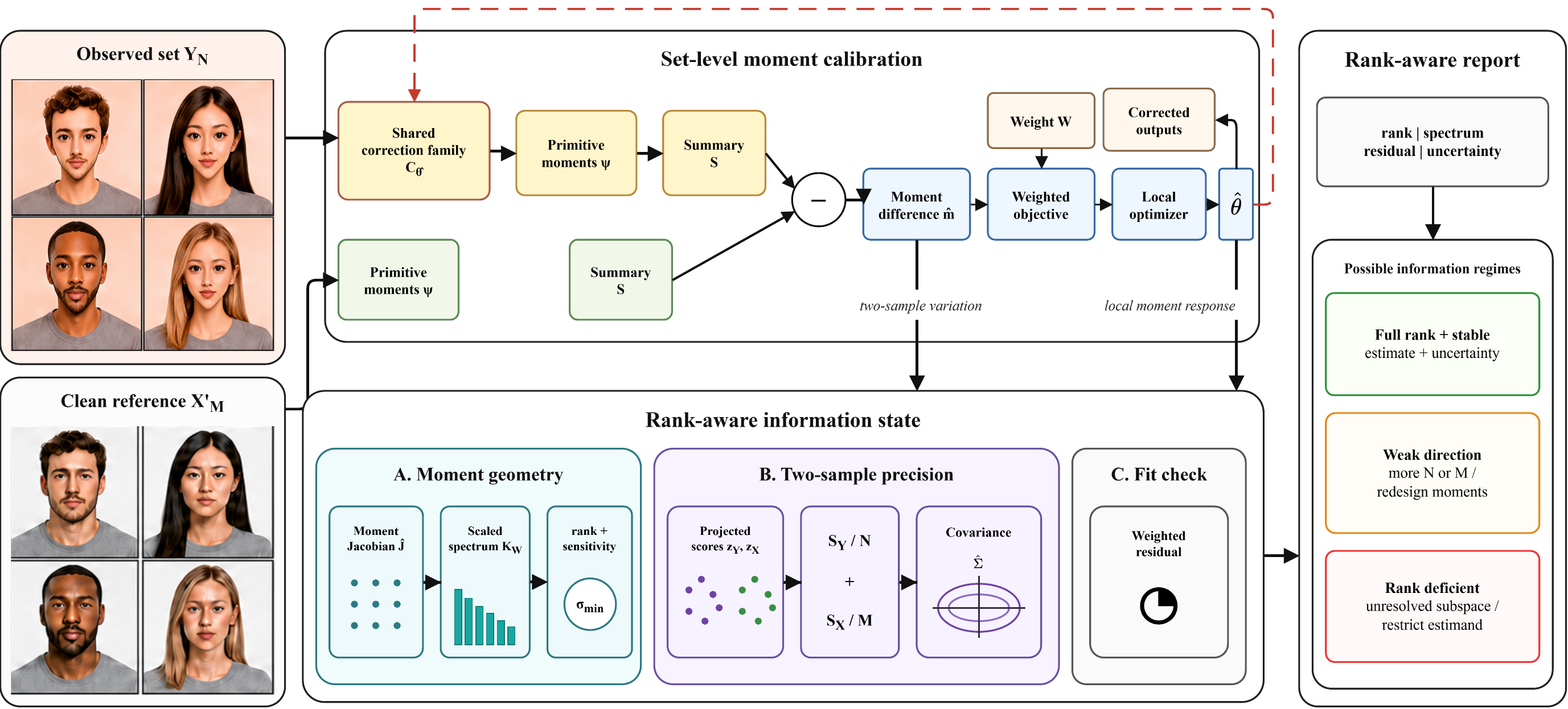}
\caption{Shared-corruption calibration from unpaired sets. One correction acts across diverse observations, and an independent clean reference set supplies reference moments. Moment matching estimates the correction. Moment response and two-sample variation together define the rank-aware information state.}
\label{fig:framework}
\end{figure*}

\section{Related Work}

\paragraph{Shared operators and unpaired calibration.}
Classical blind deconvolution derives identifiability from subspace or sparsity structure \cite{ahmed2014blind,li2016identifiability}. Deep image and diffusion priors couple content recovery with architectural or learned image and operator models \cite{ulyanov2018dip,chung2023dps,chung2023blinddps,li2025blindpnp}. Cycle consistency and distribution matching learn transformations from unpaired distributions \cite{zhu2017cyclegan,shaham2017mmd}. Diffusion Distribution Matching estimates a fixed degradation from small unpaired sets under its model assumptions \cite{meanti2025ddm}. Second-order channel methods identify linear responses from output statistics under excitation assumptions \cite{tong1994blind}. Our setting instead fixes a prescribed low-dimensional correction and audits information in selected moments against an independent reference population.

\paragraph{Moment estimation and local information.}
Generalized method of moments and extremum estimation establish asymptotic analysis for empirical restrictions and random objectives \cite{hansen1982gmm,pakes1989simulation,newey1994large,vandervaart1998asymptotic}. Local identification theory connects derivative rank with parameter uniqueness in a neighborhood \cite{rothenberg1971identification}. We apply these results to a corrected-moment map whose target is estimated from an independent clean reference set. Its Jacobian separates locally visible, weak, and first-order unresolved directions, while projected covariance separates the observed-set and reference-set contributions to two-sample precision. These established quantities serve an integrated reporting role rather than constituting new GMM, rank, or covariance results.


\paragraph{Geometric moments and registration.}
Classical moment invariants and complex-moment normalization characterize planar translation, rotation, and scale for one image, while Fourier registration estimates similarity transforms between corresponding images \cite{hu1962visual,abumostafa1985image,reddy1996fft}. Recent streaming work uses centroids and complex moments to identify congruence between finite point multisets and avoid vanishing rotation moments \cite{chang2026streaming}. We borrow these equivariance identities for a different statistical object. One transformation is shared across heterogeneous observations, the clean reference set is an independent population sample, and both sets contribute sampling uncertainty. The distinction lies not in a new complex-moment identity but in its bounded use within a two-population information audit for a prescribed shared correction.

\section{Problem Setup and Estimator}

The framework fixes one correction family and one set of moments. It then separates correction specification, information support, and two-sample precision. One parameter is shared by all observed samples. The estimator fits that parameter. The information state reports whether the prescribed moments support it and attributes uncertainty to the two finite sets. This separation lets photometric, geometric, and channel corrections share one estimator without treating their estimands as interchangeable.

\paragraph{Calibration signal and estimand.}
Let $Y_1,\ldots,Y_N\stackrel{\mathrm{iid}}{\sim}P_Y$ be observations affected by one shared corruption. Independently, let $X'_1,\ldots,X'_M\stackrel{\mathrm{iid}}{\sim}P_X$ be an unpaired clean reference set. A known differentiable family $C_\theta$, indexed by $\theta\in\Theta\subset\R^p$, acts on observations as a correction. Choose primitive moments $\psi:\mathcal X\to\R^\ell$ and a differentiable summary map $\mathcal S:\R^\ell\to\R^q$. Define $u_Y(\theta)=\E[\psi(C_\theta(Y))]$, $u_X=\E[\psi(X)]$, and $g(\theta)=\mathcal S\{u_Y(\theta)\}$.
The population restriction is
\begin{equation}
\mathcal S\{u_Y(\theta^*)\}=\mathcal S(u_X).
\label{eq:model}
\end{equation}
This restriction does not require sample-level correspondence or equality of the full corrected and clean distributions.

The identity summary gives ordinary moment matching, while linear projections, power-spectrum bins, and coordinate-sensitive spatial moments define correction-specific moment maps. For each scalar feature $f$, mean and standard-deviation matching uses the primitive pair $(f,f^2)$ and $\mathcal S(\mu,\nu)=(\mu,\sqrt{\nu-\mu^2})$ on positive-variance coordinates. The moments and summary map are frozen before final estimation and thereby define the information being audited.

\paragraph{Coordinate-sensitive moments for geometry.}
Spatially pooled appearance moments cannot identify geometric coordinates that they remove. For a geometric correction, let $\rho_x$ be a fixed normalized nonnegative spatial measure associated with image $x$ on $\mathbb C\simeq\R^2$, such as normalized luminance or feature energy. The construction of $\rho_x$ is fixed before calibration. Define the centroid and a third-order complex central moment by
\begin{equation}
\begin{aligned}
\mu(x)&=\int z\,d\rho_x(z),\\
\kappa(x)&=\int (z-\mu(x))^2
\overline{(z-\mu(x))}\,d\rho_x(z).
\end{aligned}
\label{eq:spatial_moments}
\end{equation}
For $a\in\mathbb C\setminus\{0\}$ and $b\in\mathbb C$, let $C_{a,b}$ denote the orientation-preserving planar-similarity correction; its induced spatial action is $T_{a,b}(z)=az+b$. On an unbounded or sufficiently padded domain with no clipping, $\rho_{C_{a,b}x}=(T_{a,b})_\#\rho_x$, and
\begin{equation}
\mu(C_{a,b}x)=a\mu(x)+b,
\qquad
\kappa(C_{a,b}x)=|a|^2a\,\kappa(x).
\label{eq:similarity_equivariance}
\end{equation}
Thus translation, rotation, and isotropic scale are represented by equivariant rather than invariant moments. Using the real and imaginary parts of $(\mu,\kappa)$ gives four real moment coordinates for the four real parameters in $(a,b)$.

\paragraph{Two-sample estimator.}
The empirical moment difference retains fluctuations from both sets:
\begin{equation}
\begin{aligned}
\widehat m_{N,M}(\theta)
&=\mathcal S\{\bar u_Y(\theta)\}-\mathcal S(\bar u_X),\\
\bar u_Y(\theta)&=\frac1N\sum_i\psi(C_\theta(Y_i)),\\
\bar u_X&=\frac1M\sum_j\psi(X'_j).
\end{aligned}
\label{eq:moment}
\end{equation}
Its population counterpart vanishes at $\theta^*$, and the smooth summary preserves the $N^{-1/2}$ and $M^{-1/2}$ first-order fluctuations. For a fixed symmetric positive-definite $W\in\R^{q\times q}$, define
\begin{equation}
\widehat{\mathcal L}_{N,M}(\theta)
=\widehat m_{N,M}(\theta)^\top W\widehat m_{N,M}(\theta).
\label{eq:estimator}
\end{equation}
We let $\widehat\theta$ denote the selected interior local minimizer; the theory below conditions on its consistency. The operator instances use either direct moment inversion or constrained gradient optimization.

\paragraph{Projected two-sample uncertainty.}
At the selected local minimizer, define
\begin{equation}
\widehat J
=\left.\nabla_\theta\mathcal S\{\bar u_Y(\theta)\}\right|_{\widehat\theta},
\qquad
\widehat A=\widehat J^\top W\widehat J.
\label{eq:plugin}
\end{equation}
Let $\widehat R_Y=D_u\mathcal S\{\bar u_Y(\widehat\theta)\}$ and $\widehat R_X=D_u\mathcal S(\bar u_X)$. The projected observed terms are $\widehat J^\top W\widehat R_Y\psi(C_{\widehat\theta}(Y_i))$; the reference terms are $\widehat J^\top W\widehat R_X\psi(X'_j)$. Let $\widehat S_Y$ and $\widehat S_X$ denote their sample covariances. When $\widehat J$ has full column rank, the projected sandwich estimator is
\begin{equation}
\widehat\Sigma_\theta(N,M)=
\widehat A^{-1}
\left(\frac{\widehat S_Y}{N}+\frac{\widehat S_X}{M}\right)
\widehat A^{-1}.
\label{eq:plugincov}
\end{equation}
After projection, covariance estimation and inversion involve only $p\times p$ matrices. The expanded score formulas and finite-sample standard-deviation convention are given in the Supplementary Material.

\paragraph{Rank-aware information state.}
Fix a diagonal matrix $D$ before estimation. Its entries $d_k=D_{kk}$ satisfy $0<d_k<\infty$ and encode declared parameter tolerances or reciprocal coordinate response norms computed on a pilot split. Let $J^*=\nabla_\theta g(\theta^*)\in\R^{q\times p}$ be the population Jacobian. The scaled weighted population and empirical Jacobians are
\begin{equation}
\begin{aligned}
K_W&=W^{1/2}J^*D,\\
\widehat K_W&=W^{1/2}\widehat J D.
\end{aligned}
\label{eq:whitenedJ}
\end{equation}
A full-rank fit reports $\widehat\theta$, the numerical rank and complete singular spectrum of $\widehat K_W$, the weighted residual, and $\widehat\Sigma_\theta(N,M)$. A rank-deficient fit reports the unresolved right singular subspace instead of ordinary parameter intervals.

The necessary dimension condition $q\ge p$ is not sufficient. The summaries must respond independently across parameter coordinates. A pilot Jacobian checks these responses on representative observations, and any data-selected moments are frozen using a pilot split or valid cross-fitting. Rank, spectrum, residual, and covariance define the operational distinctions among moment compatibility, scaled moment sensitivity, and two-sample precision; the retained operator instances do not report them jointly.

\paragraph{Standing assumptions and inference boundary.}
The restriction \eqref{eq:model} defines the estimand through the fixed correction family, moments, and summary map. The general theory assumes this restriction is correctly specified at an interior $\theta^*$; $g$ and $\mathcal S$ are continuously differentiable in the required neighborhoods; $\psi(C_\theta(Y))$ is almost surely differentiable with derivative dominated by an integrable envelope; primitive moments have finite second moments; the two sets are independent and i.i.d.\ within sets; empirical primitive moments and derivatives converge locally uniformly; and the selected interior local minimizer is consistent as $N,M\to\infty$. Full column rank is imposed only where ordinary local parameter inference is claimed. Corollary~\ref{cor:scaled} additionally requires a positive-definite leading moment covariance and fixes $W$ to its inverse. These are identification and inference assumptions, not guarantees of numerical convergence or correct specification in an application.

\section{Moment Geometry for Identifiability, Stability, and Two-Sample Precision}

The corrected-moment Jacobian describes which local parameter directions are visible and how sampling noise is amplified. The two-sample influence covariance describes how observed and reference variation passes through that geometry into parameter precision.
Proof ideas are retained below; complete proofs and the optional joint Gaussian limit are given in the Supplementary Material.

\begin{proposition}[Local moment identifiability]
\label{prop:local}
If $J^*$ has full column rank, then there is a neighborhood $U$ of $\theta^*$ in which $g(\theta)=g(\theta^*)$ implies $\theta=\theta^*$.
\end{proposition}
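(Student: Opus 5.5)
The plan is to prove the statement by a first-order Taylor bound combined with a uniform lower bound coming from full column rank. Only continuous differentiability of $g$ near $\theta^*$ is used, which is one of the standing assumptions. Because $J^*\in\R^{q\times p}$ has full column rank, its smallest singular value $s:=\sigma_{\min}(J^*)$ is positive, so $\|J^*h\|\ge s\|h\|$ for every $h\in\R^p$. The task is then to show that no nonzero increment $h=\theta-\theta^*$ in a small ball can satisfy $g(\theta^*+h)=g(\theta^*)$.

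First I choose a convex neighborhood. By continuity of $\nabla_\theta g$ at $\theta^*$, there is a radius $r>0$ with the ball $U=B(\theta^*,r)\subset\Theta$, interior since $\theta^*$ is interior, such that $\|\nabla_\theta g(\vartheta)-J^*\|_{\mathrm{op}}\le s/2$ for all $\vartheta\in U$. Second, I apply the integral form of the mean value theorem along the segment from $\theta^*$ to $\theta\in U$:
\begin{equation*}
g(\theta)-g(\theta^*)=\int_0^1\nabla_\theta g(\theta^*+t h)\,dt\;h = J^*h+\int_0^1\bigl(\nabla_\theta g(\theta^*+th)-J^*\bigr)dt\;h .
\end{equation*}
Third, I combine this with the triangle inequality to get
\begin{equation*}
\|g(\theta)-g(\theta^*)\|\ge s\|h\|-\tfrac{s}{2}\|h\|=\tfrac{s}{2}\|\theta-\theta^*\|.
\end{equation*}
If $g(\theta)=g(\theta^*)$, the left side is zero, so $\theta=\theta^*$. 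The same inequality also gives a local quantitative stability bound, which I would note because the later conditioning results build on it.

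The only delicate point is that a vector-valued map admits no pointwise mean value equality. This is why the integral form, together with convexity of the ball, is needed instead of an intermediate-point argument. A second point is that the inverse function theorem does not apply directly when $q>p$, and the lower singular value bound sidesteps this. Finally, the assumption that $g=\mathcal S\circ u_Y$ is $C^1$ near $\theta^*$ already packages the dominated-derivative conditions, so no further interchange of expectation and differentiation needs to be justified here.
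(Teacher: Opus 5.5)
Your proof is correct, but it takes a different route from the paper's. The paper chooses a row-selection matrix $T\in\R^{p\times q}$ with $TJ^*$ nonsingular. It applies the inverse-function theorem to the square map $Tg$, then notes that $g(\theta)=g(\theta^*)$ forces $Tg(\theta)=Tg(\theta^*)$. So the paper handles $q>p$ by discarding $q-p$ moment coordinates. You keep the full rectangular Jacobian instead. Continuity of $\nabla_\theta g$, combined with the integral mean-value formula on a convex ball, gives the coercivity bound $\|g(\theta)-g(\theta^*)\|\ge\tfrac{s}{2}\|\theta-\theta^*\|$ with $s=\sigma_{\min}(J^*)$. This is in effect the injectivity half of the inverse-function theorem, proved directly for a non-square Jacobian. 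The paper's route is shorter if that theorem is taken as a black box. It also yields a local $C^1$ inverse of the selected coordinates, which the statement does not need. Your route is self-contained, uses all $q$ coordinates, and gives an explicit local stability constant. A row-selection argument would be governed by $\sigma_{\min}(TJ^*)$, which is at most $\sigma_{\min}(J^*)$ because $\|TJ^*h\|\le\|J^*h\|$. Your constant therefore ties naturally to the paper's later $\sigma_{\min}(K_W)$ amplification results. Finally, applying your estimate to any two points of the ball gives injectivity of $g$ on all of $U$, since the segment between them stays inside the ball. That matches what the paper's argument provides.
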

\noindent\textit{Proof idea.}
Select $p$ rows of $J^*$ forming a nonsingular matrix and apply the inverse-function theorem to the corresponding coordinates of $g$.

For any $v\in\ker J^*$, differentiability gives
\begin{equation}
\|g(\theta^*+tv)-g(\theta^*)\|_2=o(|t|),
\qquad t\to0.
\label{eq:kernel}
\end{equation}
Thus full rank supports local parameter inference, while a Jacobian-kernel direction is unresolved to first order. Increasing $N$ or $M$ reduces sampling noise but does not change this population derivative.

\begin{theorem}[Unified two-sample linearization]
\label{thm:asymptotic}
Suppose the standing conditions hold and $J^*$ has full column rank. Write $u_Y^*=u_Y(\theta^*)$, $R_Y^*=D_u\mathcal S(u_Y^*)$, and $R_X=D_u\mathcal S(u_X)$. Define
\begin{equation}
\begin{aligned}
a_Y(Y)&=R_Y^*\{\psi(C_{\theta^*}(Y))-u_Y^*\},\\
a_X(X')&=R_X\{\psi(X')-u_X\}.
\end{aligned}
\end{equation}
Let $V_Y=\Var[a_Y(Y)]$, $V_X=\Var[a_X(X')]$, and
\begin{equation}
B=(J^{*\top}WJ^*)^{-1}J^{*\top}W.
\end{equation}
Then the selected consistent local minimizer satisfies
\begin{equation}
\begin{split}
\widehat\theta-\theta^*={}&-B\left[
\frac1N\sum_{i=1}^Na_Y(Y_i)\right.\\
&\left.-\frac1M\sum_{j=1}^Ma_X(X'_j)
\right]+o_p(N^{-1/2}+M^{-1/2}).
\end{split}
\label{eq:linearization}
\end{equation}
Its leading asymptotic covariance is
\begin{equation}
\Sigma_{N,M}
=B\left(\frac{V_Y}{N}+\frac{V_X}{M}\right)B^\top.
\label{eq:sandwich}
\end{equation}
\end{theorem}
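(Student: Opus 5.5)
We follow the standard extremum-estimator argument: differentiate the GMM objective at the interior minimizer, expand the empirical moment difference around $\theta^*$, and solve the first-order condition for $\widehat\theta-\theta^*$.

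\emph{First-order condition.} Because $\widehat\theta$ is an interior local minimizer of $\widehat{\mathcal L}_{N,M}$,
\[
\widehat J^\top W\,\widehat m_{N,M}(\widehat\theta)=0,
\]
and this holds with probability tending to one.

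\emph{Mean-value expansion.} We expand $\widehat m_{N,M}(\widehat\theta)$ about $\theta^*$ coordinate by coordinate:
\[
\widehat m_{N,M}(\widehat\theta)=\widehat m_{N,M}(\theta^*)+\widetilde J(\widehat\theta-\theta^*).
\]
Here each row of $\widetilde J$ is the Jacobian of $\mathcal S\{\bar u_Y(\cdot)\}$ evaluated at an intermediate point between $\widehat\theta$ and $\theta^*$.

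The derivative $\nabla_\theta\bar u_Y(\theta)$ converges locally uniformly to $\nabla_\theta u_Y(\theta)$. This follows from the integrable envelope, which justifies exchanging the derivative and the expectation, together with the assumed uniform convergence. Since $\mathcal S$ is $C^1$ and $\widehat\theta$ is consistent, both $\widetilde J$ and $\widehat J$ converge in probability to $J^*$.

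Substituting the expansion into the first-order condition gives
\[
\widehat J^\top W\widetilde J\,(\widehat\theta-\theta^*)=-\widehat J^\top W\,\widehat m_{N,M}(\theta^*).
\]
Full column rank of $J^*$ makes $J^{*\top}WJ^*$ invertible, so $\widehat J^\top W\widetilde J$ is invertible with probability tending to one. Its inverse-weighted factor converges to $B$.

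\emph{Linearizing the moment difference at $\theta^*$.} The restriction \eqref{eq:model} says $\mathcal S(u_Y^*)=\mathcal S(u_X)$. A delta-method expansion of each summary term then gives
\[
\widehat m_{N,M}(\theta^*)=R_Y^*(\bar u_Y(\theta^*)-u_Y^*)-R_X(\bar u_X-u_X)+o_p(\|\bar u_Y(\theta^*)-u_Y^*\|+\|\bar u_X-u_X\|).
\]
Finite second moments and the central limit theorem give $\|\bar u_Y(\theta^*)-u_Y^*\|=O_p(N^{-1/2})$ and $\|\bar u_X-u_X\|=O_p(M^{-1/2})$. Therefore
\[
\widehat m_{N,M}(\theta^*)=\frac1N\sum_i a_Y(Y_i)-\frac1M\sum_j a_X(X'_j)+o_p(N^{-1/2}+M^{-1/2}).
\]
In particular, $\widehat m_{N,M}(\theta^*)=O_p(N^{-1/2}+M^{-1/2})$. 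This implies that $\widehat\theta-\theta^*$ is of the same order. Replacing $(\widehat J^\top W\widetilde J)^{-1}\widehat J^\top W$ by $B$ then costs only $o_p(1)\cdot O_p(N^{-1/2}+M^{-1/2})$, which yields \eqref{eq:linearization}.

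\emph{Covariance.} The two sets are independent and i.i.d.\ within sets, and $a_Y,a_X$ are mean zero. The covariance of the bracketed term in \eqref{eq:linearization} is therefore $V_Y/N+V_X/M$. Sandwiching by $B$ gives \eqref{eq:sandwich}.

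\emph{Main obstacle.} The delicate step is controlling remainders uniformly when $N$ and $M$ grow at unrelated rates. The fix is to keep every remainder in the combined scale $N^{-1/2}+M^{-1/2}$. Concretely, the $O_p$ bounds above must be established first, and only then can the random matrix be replaced by $B$.
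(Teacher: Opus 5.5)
Your proposal is correct and takes essentially the same route as the paper's proof. Like the paper, you delta-expand $\widehat m_{N,M}(\theta^*)$ into the two independent influence sums, linearize the first-order condition $\widehat J^\top W\widehat m_{N,M}(\widehat\theta)=0$ at $\theta^*$ using $J^{*\top}WJ^*\succ0$, and use independence for the covariance; you additionally spell out details the paper leaves implicit, such as the intermediate-point Jacobian $\widetilde J$, the convergence of $\widehat J$ and $\widetilde J$ to $J^*$, and securing the $O_p(N^{-1/2}+M^{-1/2})$ rate before replacing the random matrix by $B$.
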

\noindent\textit{Proof idea.}
The delta method gives the two independent influence sums; linearizing the GMM first-order condition maps their additive covariance through $B$ \cite{hansen1982gmm,vandervaart1998asymptotic}.

Theorem~\ref{thm:asymptotic} maps moment noise into parameter error through $B$. The covariance separates the observed contribution $BV_YB^\top/N$ from the reference contribution $BV_XB^\top/M$. The same Jacobian that certifies local identifiability therefore controls how both sampling sources are amplified.

\begin{corollary}[Scaled asymptotic error amplification]
\label{cor:scaled}
For a planned pair $(N,M)$, let $\Omega_{N,M}=V_Y+(N/M)V_X$ be positive definite and fix $W=\Omega_{N,M}^{-1}$. Let $e_D=D^{-1}(\widehat\theta-\theta^*)$. The leading covariance $\Sigma_D$ of $e_D$ satisfies
\begin{equation}
\begin{aligned}
\Sigma_D&=\frac1N(K_W^\top K_W)^{-1},\\
\sup_{\|u\|_2=1}u^\top\Sigma_Du
&=\frac{1}{N\sigma_{\min}(K_W)^2}.
\end{aligned}
\label{eq:scaled_amplification}
\end{equation}
\end{corollary}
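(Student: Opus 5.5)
We plan to specialize Theorem~\ref{thm:asymptotic} to the efficient weight and then transform to scaled coordinates. First, factor the leading covariance \eqref{eq:sandwich} as
\[
\Sigma_{N,M}=\frac1N\,B\,\Omega_{N,M}\,B^\top ,
\qquad
\frac{V_Y}{N}+\frac{V_X}{M}=\frac1N\Bigl(V_Y+\frac NM V_X\Bigr)=\frac1N\Omega_{N,M}.
\]
Then substitute $W=\Omega_{N,M}^{-1}$ into $B=(J^{*\top}WJ^*)^{-1}J^{*\top}W$. Writing $A=J^{*\top}WJ^*$, which is invertible because $J^*$ has full column rank and $W$ is positive definite, we have
\[
B\Omega_{N,M}B^\top=A^{-1}J^{*\top}W\,W^{-1}\,WJ^*A^{-1}=A^{-1}AA^{-1}=A^{-1}.
\]
Hence $\Sigma_{N,M}=N^{-1}(J^{*\top}WJ^*)^{-1}$. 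This is the standard GMM cancellation under efficient weighting.

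Next, pass to the scaled error. Since $D$ is a fixed nonsingular diagonal matrix, the linearization \eqref{eq:linearization} transfers directly to $e_D=D^{-1}(\widehat\theta-\theta^*)$, with leading covariance
\[
\Sigma_D=D^{-1}\Sigma_{N,M}D^{-1}=\frac1N\bigl(D\,J^{*\top}WJ^*D\bigr)^{-1}.
\]
Because $W$ is symmetric positive definite, its symmetric square root gives $J^{*\top}WJ^*=(W^{1/2}J^*)^\top(W^{1/2}J^*)$. Therefore $DJ^{*\top}WJ^*D=K_W^\top K_W$ by \eqref{eq:whitenedJ}, which proves the first identity.

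For the supremum, note that $K_W$ has full column rank $p$, since $W^{1/2}$ and $D$ are nonsingular. So $K_W^\top K_W$ is positive definite with eigenvalues $\sigma_i(K_W)^2$. The Rayleigh quotient of its inverse over unit vectors is maximized at the eigenvalue $1/\sigma_{\min}(K_W)^2$, attained at the corresponding right singular vector. Multiplying by $1/N$ gives the second identity.

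The only delicate point is interpretive. $W$ depends on the planned ratio $N/M$, so Theorem~\ref{thm:asymptotic} must be applied with this fixed $W$. The fixed-$W$ requirement in \eqref{eq:estimator} is satisfied because $(N,M)$ is planned in advance, or equivalently because $N/M$ is held fixed along the asymptotic sequence. The "leading covariance" should be understood as the exact expression \eqref{eq:sandwich} evaluated at that pair. Everything else is linear algebra.
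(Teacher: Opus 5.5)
Your proposal is correct and follows the paper's proof essentially step for step. You rewrite $V_Y/N+V_X/M$ as $\Omega_{N,M}/N$, use $W=\Omega_{N,M}^{-1}$ to collapse the sandwich to $N^{-1}(J^{*\top}WJ^*)^{-1}$, conjugate by $D^{-1}$ to obtain $N^{-1}(K_W^\top K_W)^{-1}$, and finish with the variational characterization of the largest eigenvalue; your explicit check that $B\Omega_{N,M}B^\top=A^{-1}$ and that $K_W$ has full column rank just fills in steps the paper leaves implicit.
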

\noindent\textit{Proof idea.}
Substitute $W=\Omega_{N,M}^{-1}$ into Equation~\eqref{eq:sandwich}, conjugate by $D^{-1}$, and apply the largest-eigenvalue variational characterization.

\begin{proposition}[Population identification of a shared planar similarity]
\label{prop:similarity}
Assume that applying the correction $C_{a^*,b^*}$ to the observed population yields the clean population,
\[
C_{a^*,b^*}(Y)\stackrel{d}{=}X,
\qquad
a^*\in\mathbb C\setminus\{0\},
\quad
b^*\in\mathbb C.
\]
Let superscripts $Y$ and $X$ denote observed- and clean-population moments:
\[
\begin{aligned}
m_\mu^Y&=\E[\mu(Y)], & m_\mu^X&=\E[\mu(X)],\\
m_\kappa^Y&=\E[\kappa(Y)], & m_\kappa^X&=\E[\kappa(X)].
\end{aligned}
\]
If $m_\kappa^Y\neq0$, then $(a^*,b^*)$ is globally identified at the population level within the orientation-preserving similarity family by
\begin{equation}
r^*=\frac{m_\kappa^X}{m_\kappa^Y},
\qquad
a^*=\frac{r^*}{|r^*|^{2/3}},
\qquad
b^*=m_\mu^X-a^*m_\mu^Y.
\label{eq:similarity_inverse}
\end{equation}
The corresponding four-real-parameter moment map has full column rank. Writing hats for the corresponding observed-set and reference-set averages, the plug-in estimator exists whenever $\widehat m_\kappa^Y\neq0$. Under finite second moments, it is consistent and inherits the separate $N^{-1}$ and $M^{-1}$ first-order covariance contributions in Theorem~\ref{thm:asymptotic}.
\end{proposition}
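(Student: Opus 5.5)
The plan is to push the equivariance identities \eqref{eq:similarity_equivariance} through expectations and then invert them in closed form. Since $C_{a^*,b^*}(Y)\stackrel{d}{=}X$ and $\mu,\kappa$ are fixed measurable functionals, we have $m_\mu^X=\E[\mu(C_{a^*,b^*}Y)]$ and $m_\kappa^X=\E[\kappa(C_{a^*,b^*}Y)]$. Applying \eqref{eq:similarity_equivariance} pointwise (no clipping) and using linearity of expectation, with $a^*,b^*$ deterministic, gives
\[
m_\mu^X=a^*m_\mu^Y+b^*,\qquad m_\kappa^X=|a^*|^2a^*\,m_\kappa^Y.
\]
Integrability is needed for this step; I would take it from the finite second moments of the primitive moments.

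For the inversion, $m_\kappa^Y\neq0$ gives $r^*=m_\kappa^X/m_\kappa^Y=|a^*|^2a^*$. The key observation is that the map $\phi(a)=|a|^2a$ is a bijection of $\mathbb C\setminus\{0\}$. Writing $a=se^{i\vartheta}$ with $s>0$, we get $\phi(a)=s^3e^{i\vartheta}$. Hence $\arg r^*=\arg a^*$ and $|r^*|=|a^*|^3$, so
\[
a^*=\frac{r^*}{|r^*|}\,|r^*|^{1/3}=\frac{r^*}{|r^*|^{2/3}}.
\]
Note also that $r^*\neq0$ because $a^*\neq0$. Then $b^*=m_\mu^X-a^*m_\mu^Y$ follows from the centroid equation. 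Uniqueness holds because any other $(a,b)$ in the family that reproduces the moments must satisfy the same two equations, so it coincides with $(a^*,b^*)$. This gives global identification.

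For full column rank, I would parametrize by $\theta=(\Re a,\Im a,\Re b,\Im b)$. The map $g(\theta)=(a m_\mu^Y+b,\ \phi(a)m_\kappa^Y)$ is block triangular, with the $b$-block equal to the identity on $\R^2$. It therefore suffices that the real $2\times2$ Jacobian of $a\mapsto\phi(a)m_\kappa^Y$ is nonsingular. Multiplication by $m_\kappa^Y\neq0$ is a rotation-dilation, so this reduces to $D\phi$. In polar coordinates $\phi$ maps $(s,\vartheta)\mapsto(s^3,\vartheta)$, which has nonzero Jacobian for $s>0$. Equivalently, the Wirtinger derivatives are $\partial_a\phi=2|a|^2$ and $\partial_{\bar a}\phi=a^2$, and the Jacobian determinant is $|\partial_a\phi|^2-|\partial_{\bar a}\phi|^2=4|a|^4-|a|^4=3|a|^4>0$. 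Hence the $4\times4$ Jacobian is invertible, which is the main algebraic check.

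For the estimator, the plug-in $\widehat a=\widehat r/|\widehat r|^{2/3}$ with $\widehat r=\widehat m_\kappa^X/\widehat m_\kappa^Y$ is defined whenever $\widehat m_\kappa^Y\neq0$ (and $\widehat r\neq0$). The law of large numbers for each independent set gives $\widehat m\to m$ almost surely. Consistency then follows from the continuous mapping theorem, because the inverse map $(m_\mu^Y,m_\kappa^Y,m_\mu^X,m_\kappa^X)\mapsto(a,b)$ is continuous, indeed $C^1$ in the real sense, on $\{m_\kappa^Y\neq0,\ m_\kappa^X\neq0\}$. Since $m_\kappa^Y\neq0$, the event $\widehat m_\kappa^Y\neq0$ has probability tending to one.

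For the covariance, under finite second moments the two averages satisfy independent CLTs at rates $N^{-1/2}$ and $M^{-1/2}$. The delta method applied to this $C^1$ inverse map yields a linearization with separate $N^{-1}$ and $M^{-1}$ covariance terms. Here $q=p=4$ and the Jacobian is invertible, so $B=J^{*-1}$ for any $W$, and this coincides with \eqref{eq:linearization}–\eqref{eq:sandwich} of Theorem~\ref{thm:asymptotic}.

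I expect the main obstacle to be that $\kappa$ is a nonlinear functional of each image, because it is centered at $\mu(x)$. One must therefore check that taking expectations commutes correctly. Equivariance holds per image before averaging, so only integrability matters, and I would handle it by assuming the moment functionals have finite second moments.
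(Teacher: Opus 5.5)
Your proposal is correct and follows the paper's proof step for step. You push the equivariance identities through expectations, invert $a\mapsto|a|^2a$ in polar form, and observe a block structure with an identity $b$-block and real determinant $3|a|^4$ for the $a$-block; you then finish with the law of large numbers and the delta method. Your Wirtinger computation of $3|a|^4$, the remark that $B=J^{*-1}$ for every $W$ when $q=p=4$, and the caveat that $\widehat r\neq0$ is also needed for $\widehat a$ to be defined are useful refinements of points the paper only asserts.
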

\noindent\textit{Proof idea.}
Equivariance gives the two moment equations. The map $a\mapsto|a|^2a$ is bijective off zero, while $m_\kappa^Y\neq0$ and the centroid equation make the full real Jacobian nonsingular.

The condition $m_\kappa^Y\neq0$ is substantive. If this ensemble moment vanishes because of rotational symmetry or cross-image cancellation, the prescribed moment pair does not identify orientation and scale. More samples can reduce estimation noise around zero but cannot create the missing response. The rank-aware action is to add another equivariant moment or restrict the estimand. Proposition~\ref{prop:similarity} covers shared planar translation, rotation, and isotropic scale on a no-clipping domain. General affine and nonrigid families retain only the local claim in Proposition~\ref{prop:local} unless their chosen moment map is shown to be globally injective.

\begin{corollary}[Gaussian mean shift]
\label{cor:gaussian}
Let $\psi(x)=x$, let $\mathcal S$ be the identity map, and let $C_\theta(y)=y-\theta$. Suppose
\begin{equation}
Y_i\stackrel{\mathrm{iid}}{\sim}\mathcal N(\mu+s,VI_d),
\qquad
X'_j\stackrel{\mathrm{iid}}{\sim}\mathcal N(\mu,VI_d),
\end{equation}
with independent sets, $V>0$, and $S=\|s\|_2>0$. Then $\widehat s=\overline Y-\overline X'$ is unbiased and
\begin{equation}
\E\|\widehat s-s\|_2^2
=dV\left(\frac1N+\frac1M\right).
\end{equation}
A target relative MSE $\E\|\widehat s-s\|_2^2/S^2\le\tau^2$ is feasible when $dV/(MS^2)<\tau^2$. The observed sample requirement is
\begin{equation}
N\ge
\left\lceil
\frac{dV}{\tau^2S^2-dV/M}
\right\rceil.
\label{eq:sample_requirement}
\end{equation}
When $\mu$ is known, set $M=\infty$. The estimator $\overline Y-\mu$ then attains the Cram\'er--Rao lower bound $VI_d/N$.
\end{corollary}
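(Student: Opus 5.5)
The plan is to compute everything directly from the two independent Gaussian sample means. First we record their distributions: $\overline Y\sim\mathcal N(\mu+s,VI_d/N)$ and $\overline X'\sim\mathcal N(\mu,VI_d/M)$, and they are independent. It follows that $\widehat s=\overline Y-\overline X'$ is Gaussian with mean $s$, so it is unbiased, and its covariance is $V(1/N+1/M)I_d$. Taking the trace gives $\E\|\widehat s-s\|_2^2=dV(1/N+1/M)$. We will also note that this is exactly Theorem~\ref{thm:asymptotic} specialized to this setting. There $J^*=-I_d$, $B=-W^{-1}W=\dots$ reduces to $\pm I_d$, $V_Y=V_X=VI_d$, and the remainder term vanishes identically because the estimator is linear. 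This makes the corollary an exact finite-sample instance of the sandwich formula~\eqref{eq:sandwich}.

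Next, for the sample requirement, we divide by $S^2$ and ask when $dV/(NS^2)+dV/(MS^2)\le\tau^2$. The $M$-term is a floor that does not depend on $N$. Some finite $N$ can therefore work only if $dV/(MS^2)<\tau^2$, and this strict inequality is the feasibility condition. Under it, the denominator $\tau^2S^2-dV/M$ is positive. Rearranging gives $N\ge dV/(\tau^2S^2-dV/M)$, and since $N$ is an integer we take the ceiling, which is \eqref{eq:sample_requirement}. The same algebra also shows the converse: if the inequality is not strict, no finite $N$ works.

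Finally, for known $\mu$ we formally set $M=\infty$, so the estimator becomes $\overline Y-\mu$. Its covariance is $VI_d/N$. For the Cram\'er--Rao comparison, the Fisher information for $s$ from $N$ i.i.d.\ observations $\mathcal N(\mu+s,VI_d)$ is $NI_d/V$, and its inverse $VI_d/N$ matches the covariance of the unbiased estimator exactly, so the bound is attained. The computation is routine. The only step needing care is stating the feasibility condition as strict, which keeps the denominator in \eqref{eq:sample_requirement} positive; the Fisher-information calculation itself is standard.
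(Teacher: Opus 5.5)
Your proposal is correct and follows the paper's proof essentially step for step. You use $\overline Y-\overline X'\sim\mathcal N(s,V(1/N+1/M)I_d)$ for unbiasedness and the trace formula, solve the relative-MSE inequality under the strict floor condition $dV/(MS^2)<\tau^2$ to get the ceiling bound, and invert the Fisher information $NI_d/V$ to show $\overline Y-\mu$ attains the Cram\'er--Rao bound. Your side remark linking the result to Theorem~\ref{thm:asymptotic} is not needed, though it is a valid observation; there $B=-I_d$ exactly, not merely $\pm I_d$.
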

\noindent\textit{Proof idea.}
The difference of the independent Gaussian means has covariance $V(1/N+1/M)I_d$; its trace gives the MSE, and the known-mean Fisher information is $NI_d/V$.

The theory yields five stochastic predictions for the retained experiments. Parameter error decreases with $N$ only within locally sensed directions. Larger individual variance and weaker shared effects require more samples. A finite reference set contributes a separate covariance term. Smaller scaled singular values amplify parameter error according to $\sigma_{\min}^{-2}$. In the exact linear construction, Jacobian-kernel components remain unchanged under the standardized minimum-norm estimator. Proposition~\ref{prop:similarity} gives a separate analytic result. Under its matched-population, no-clipping, and nonzero-moment assumptions, one shared orientation-preserving planar similarity is globally identified at the population level.

\begin{figure*}[t]
\centering
\includegraphics[width=0.98\textwidth]{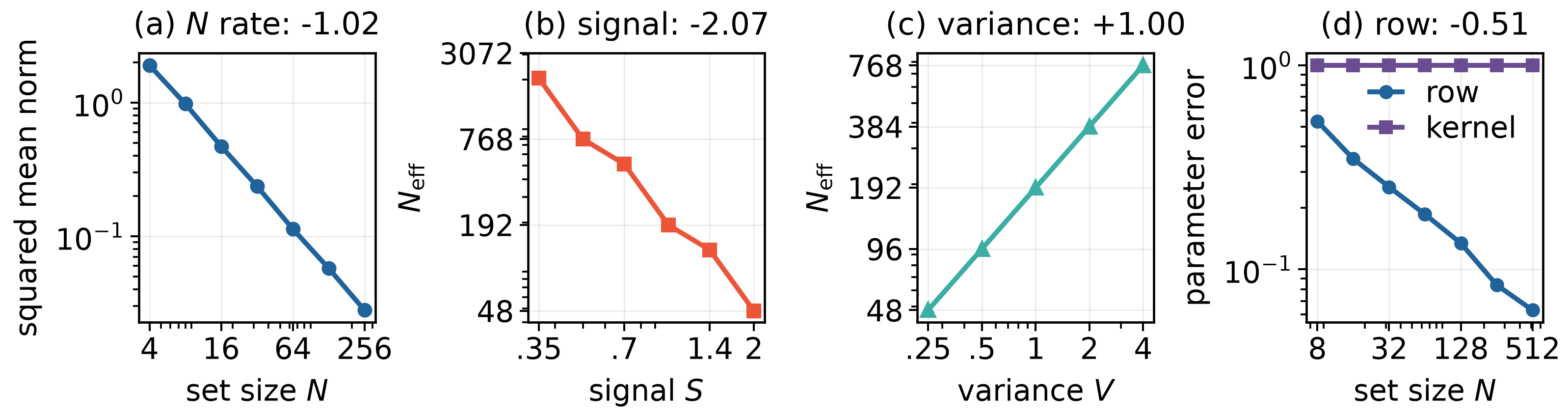}
\caption{Sampling improves sensed directions but leaves exact Jacobian-kernel directions unresolved. (a) Squared mean Euclidean shift-estimation error decreases with set size. (b) Required set size decreases with signal magnitude. (c) Required set size increases with individual variance. (d) Row-space error decreases while the injected kernel component remains unchanged. Slopes are log-log fits.}
\label{fig:theory}
\end{figure*}

\section{Experiments}

The controlled constructions test moment visibility, scaled moment sensitivity, and two-sample precision. The color and channel protocols report bounded operator outcomes, while Figure~\ref{fig:debeautification} provides a separate paired, image-level geometric de-beautification result. None of these application protocols validates the complete rank-aware information state or Proposition~\ref{prop:similarity}.

\paragraph{Controlled moment visibility and precision.}
The rank-deficient construction tests whether sampling reduces error only in sensed directions. It drew $x\sim\mathcal N(0,I_6)$ and used $y=x+G\theta^*$ and $C_\theta(y)=y-G\theta$, with $G\in\R^{6\times p}$. For $Q\in\R^{k\times6}$, the audited statistic was $Q\bar c_\theta$, where $\bar c_\theta=N^{-1}\sum_i C_\theta(y_i)$, so its Jacobian was $J=-QG$. Parameter coordinates were standardized before the minimum-norm fit. The test retained $k=2$ moment directions for $p=4$ parameters over 400 repetitions. In Figure~\ref{fig:theory}(d), mean row-space error fell from $0.5314$ to $0.0631$ with slope $-0.509$ and $R^2=0.997$, while the injected unit kernel component remained $1.000$. A separate $k=5$ draw had rank four and reduced total error from $0.6878$ to $0.0901$. Thus finite noise decreases along sensed directions, whereas a component absent from the selected moment map remains unresolved.

The scaling construction tests the predicted set-size, signal-strength, and variance exponents. It used $d=8$, $M=40{,}000$, and 200 repetitions. Squared mean Euclidean error decreased from $1.9011$ to $0.0280$ across the tested observed-set sizes. For Figures~\ref{fig:theory}(b)--(c), $N_{\mathrm{eff}}$ was the first tested set size with mean relative error below $0.20$. The fitted slopes were $-1.017$, $-2.068$, and $1.000$, respectively. These regressions test the exact linear/Gaussian construction rather than asserting the same finite-sample exponents for the nonlinear operator instances.

\begin{figure*}[t]
\centering
\includegraphics[width=0.98\textwidth]{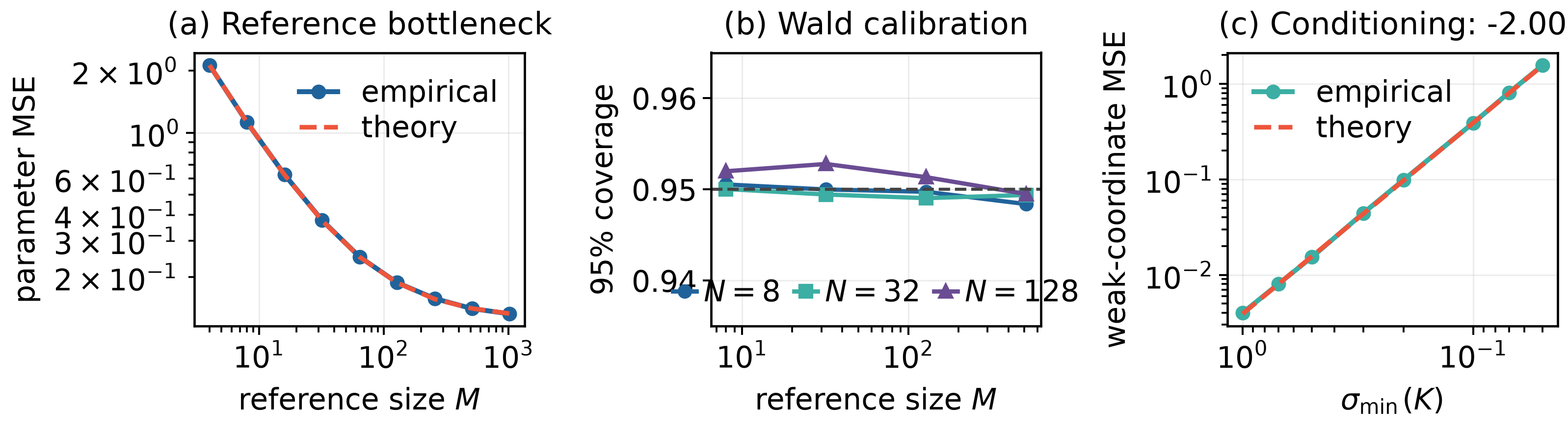}
\caption{Reference size and local conditioning control distinct components of two-sample precision. Each setting uses 20,000 draws. (a) With $N=64$, empirical mean-shift MSE follows $dV(1/N+1/M)$ across reference sizes. (b) Exact Gaussian 95\% intervals retain nominal scalar coverage across observed and reference sizes. (c) A one-dimensional optimally weighted linear test follows $1/\{N\sigma_{\min}(K_W)^2\}$.}
\label{fig:uncertainty}
\end{figure*}

The reference-size study used $d=8$, $V=1$, $N=64$, and 20,000 exact Gaussian sample-mean draws per reference size. Figure~\ref{fig:uncertainty}(a) matched $dV(1/N+1/M)$ within $0.58\%$ and approached the observed-set floor $dV/N$ as $M$ increased. The scalar intervals were $\widehat s_1\pm1.96\sqrt{V(1/N+1/M)}$; their empirical coverage ranged from $94.84\%$ to $95.28\%$ across the tested $N$ and $M$. This coverage statement is confined to the exact Gaussian construction. A separate known-mean study with $d=16$, $V=1$, and 4,000 repetitions gave CRB ratios from $0.986$ to $1.008$. In the full-rank weak-direction test, $N=256$ and $\sigma_{\min}(K_W)$ ranged from $1$ to $0.05$; MSE followed \eqref{eq:scaled_amplification} with slope $-1.999$ and $R^2=1.000$. Together, the two panels distinguish an insufficient reference set from an ill-conditioned but full-rank moment response. Complete grids and seeds are in the Supplementary Material.

\paragraph{Color corrected-output calibration.}
The color instance resized images to $256\times256$ and used the fixed correction family
\[
C_\theta=G_\eta\circ S_s\circ\Gamma_\gamma\circ R_\beta\circ A_\alpha.
\]
The same stages acted on every image and clamped outputs to $[0,1]$. The affine, monotone piecewise-linear, gamma, saturation, and gain--bias stages contributed $12$, $48$, $3$, $1$, and $6$ parameters, respectively, giving $p=70$. The piecewise-linear stage used 16 positive increments per channel. The fit matched pooled means and standard deviations from RGB and VGG-16 features against summaries from 200 clean reference faces. RGB contributed six moment coordinates. The 256-channel \texttt{features[:16]} output contributed 512, giving $q=518$. VGG-16 used fixed ImageNet-1K V1 weights, standard normalization, and evaluation mode. RGB summaries were multiplied by five and the objective used $W=I$. The correction was initialized at the identity and fitted for 300 Adam steps at learning rate $0.05$. These choices define the reported operator instance; full pooling and optimization conventions are in the Supplementary Material.

The color protocol kept the training, test, and reference identities disjoint. Images were $256\times256$ unretouched FFHQ faces \cite{karras2019stylegan}. Nine nonlearned Instagram-style formulas from \texttt{pilgram} were applied to obtain paired filtered images. Positions 0--1999 in the sorted identity list formed the training pool, while positions 2000--2299 formed the identity-disjoint test pool. Learned baselines used 16 filters on 1800 training and 200 validation identities. The nine test filters---aden, brooklyn, inkwell, lark, maven, moon, rise, slumber, and stinson---were disjoint from those 16. Importantly, all 25 belong to the same \texttt{pilgram} implementation family rather than different apps or engines. The clean reference set comprised 200 further FFHQ faces, at sorted indices 15000--15199, disjoint from both pools.

For each test filter, the first 30 filtered images formed the observed set; their clean counterparts were never used for fitting. The next 40 rows were queries, with clean images used only as evaluation ground truth, giving 360 queries. The set-size sweep applied the same rule at each $N$: fit on the first $N$ filtered images and evaluate on the next 40.

\begin{figure*}[t]
\centering
\includegraphics[width=0.98\textwidth]{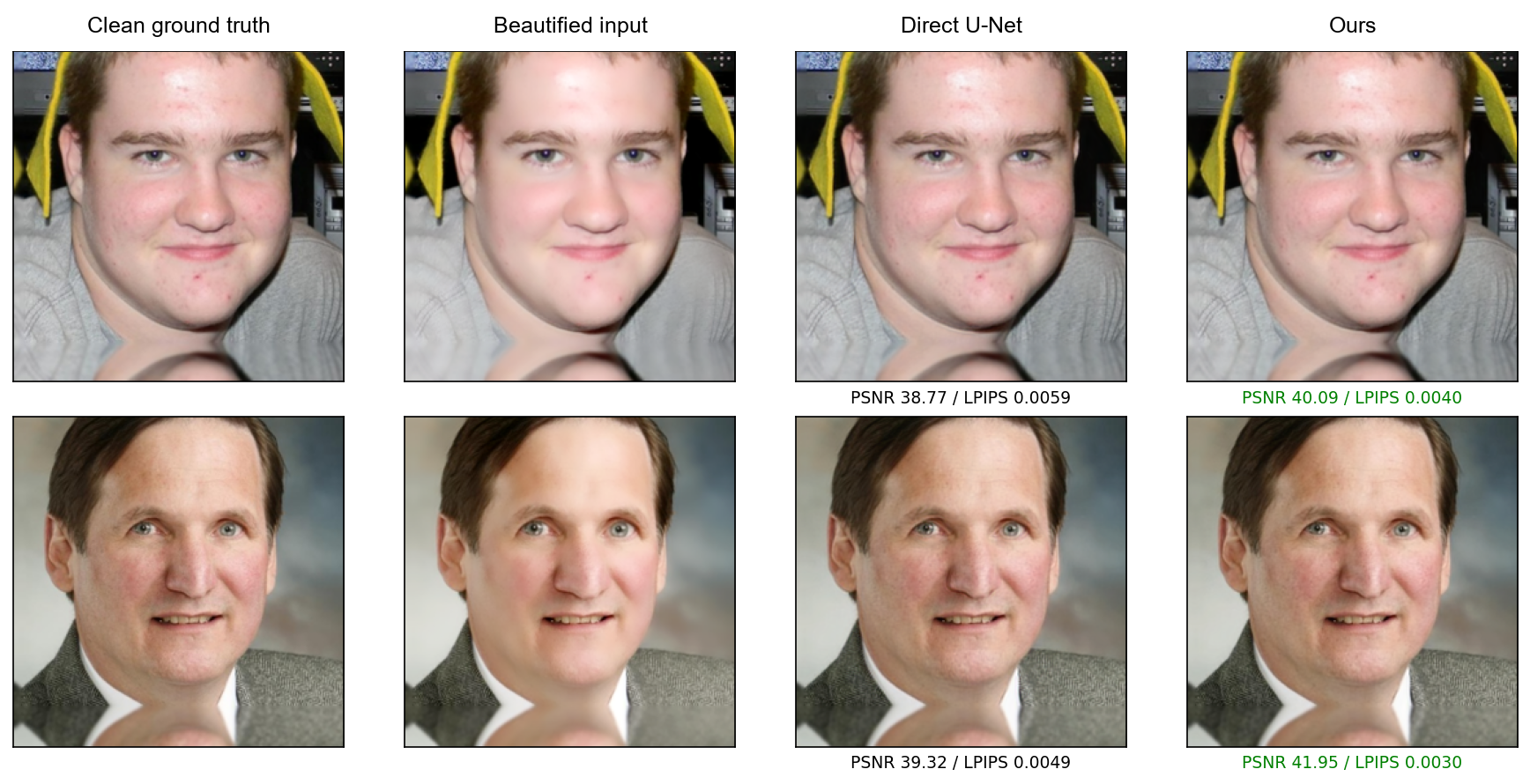}
\caption{Paired geometric de-beautification on two test faces. Columns show clean ground truth, the commercial-API beautified input, a task-specific Direct U-Net, and Ours. The printed per-image PSNR/LPIPS values favor Ours on both examples. The full four-example panel is in the Supplementary Material.}
\label{fig:debeautification}
\end{figure*}

Across shared-filter sets, LPIPS decreased from $0.1788$ at $N=1$ to $0.0578$ at $N=30$. Replacing the shared-filter calibration set with mixed filters increased the $N=30$ value to $0.1150$.

LPIPS was $0.0954$ for filtered input, $0.1023$ for a fixed supervised ARColorOpV2 trained on paired data from the 16 seen filters, $0.0761$ for set-level Reinhard transfer \cite{reinhard2001color}, and $0.0547$ for moment calibration. Harm frequency was $70\%$, $17\%$, and $1\%$ for ARColorOpV2, Reinhard, and moment calibration, respectively; harm means corrected LPIPS exceeded filtered-input LPIPS. These rows use different training and adaptation information and are not a universal restoration ranking.

\paragraph{Paired geometric de-beautification.}
For an image-level application, we evaluated our geometry-aware de-beautification model on approximately 480 paired $256\times256$ FFHQ faces processed by one commercial face-reshape operator. A width-48 U-Net predicts a stationary velocity field and an appearance residual; six scaling-and-squaring steps integrate the field before the warped input and residual are combined. Training used 5,000 Adam updates with batch size 16 and learning rate $2\times10^{-4}$. The objective combined LPIPS, pixel, identity, field-magnitude, folding, and residual penalties. The first 85\% of sorted pairs were used for training and the remaining 72 for testing. The width-matched Direct U-Net used the same pairs and split, directly predicted a residual image, and trained for 4,000 updates. On this in-distribution test, Ours reached PSNR $39.83$ and LPIPS $0.0049$, versus $38.34$ and $0.0061$ for Direct U-Net; SSIM tied at $0.972$, while identity similarity was $0.997$ versus $0.995$. Figure~\ref{fig:debeautification} shows representative outputs. This paired, per-image protocol demonstrates restored face geometry but is separate from the unpaired estimator and the shared-planar-similarity assumptions of Proposition~\ref{prop:similarity}.

\paragraph{Channel magnitude-response calibration.}
The channel instance passed length-256 AR(1) signals through one fixed normalized eight-tap LTI channel shared by every observed signal. For a stationary input,
\[
P_Y(\omega)=|H(\omega)|^2P_X(\omega).
\]
On evaluated frequencies, $P_X(\omega)\ge c_X>0$ and $|H(\omega)|\ge c_H>0$. The sampled magnitude response is the estimand, its reciprocal defines the correction, and the reference estimate used 4,000 clean signals.
The implemented estimator was
\begin{equation}
\widehat{|H(\omega)|}
=\left[
\max\left\{
\frac{\widehat P_Y(\omega)}
{\widehat P_X(\omega)+10^{-9}},\,10^{-8}
\right\}
\right]^{1/2}.
\label{eq:channel_estimator}
\end{equation}
Same-mode convolution and AR initialization introduce finite-length boundary effects. The finite records therefore do not exactly realize the stationary identity. Phase remains unresolved, and no reconstructed-signal metric is claimed. Averaged over 30 trials, relative magnitude-response MSE decreased from $0.2340$ at $N=1$ to $0.0012$ at $N=256$. For $N\ge2$, the log-log slope was $-0.971$ with $R^2=1.000$; the full grid is in the Supplementary Material.

\section{From Moment Geometry to Calibration Decisions}

Before final fitting, freeze the correction family, moments $(\psi,\mathcal S)$, parameter scales $D$, and weight $W$. Data-selected moments require a pilot split or valid cross-fitting. On the remaining samples, fit $\widehat\theta$, compute the full spectrum of $\widehat K_W$, record the numerical-rank tolerance and weighted residual, and use \eqref{eq:plugincov} only when the fitted Jacobian has full column rank. Finite differences along weak right singular vectors check the corresponding local moment changes. The report then maps each information state to an action. An adequately conditioned full-rank state returns local estimates and projected intervals. A full-rank state with small $\sigma_{\min}(\widehat K_W)$ assigns a high precision cost to weak directions. A rank-deficient state returns the fitted correction, residual, scaled spectrum, and unresolved right singular subspace rather than ordinary intervals. For Proposition~\ref{prop:similarity}, a zero or near-zero ensemble third-order moment warns that the selected moments do not support orientation and scale. Increasing $N$ or $M$ cannot remove that population symmetry. One must instead add an equivariant moment or restrict the estimand. For a prespecified contrast $u$ and target interval half width $r$, evaluate
\[
1.96\{u^\top\widehat\Sigma_\theta(N,M)u\}^{1/2}\le r
\]
over feasible sample allocations; the separate observed and reference terms identify which set offers the larger precision gain. The residual remains a compatibility diagnostic: a large value flags tension among the correction family, selected moments, and reference population, whereas a small value does not establish correct specification. Material operating-point changes require recomputing the rank-aware information state.

\section{Scope and Conclusion}

An observed set and an unpaired clean reference set support a prescribed shared correction only through fixed-moment response. The rank-aware information state separates local visibility, scaled moment sensitivity, two-sample precision, and compatibility. Full rank yields local moment identifiability. The planar-similarity result strengthens this to global population identification only under matched-population, no-clipping, and nonzero-moment assumptions. Controlled validation tests the predicted mechanisms. Bounded applications report color corrected-output quality, channel magnitude-response calibration, and paired in-distribution geometric de-beautification. The paired per-image result does not extend the unpaired theory to nonrigid deformation. The framework reports uncertainty or unresolved directions rather than unsupported identification; clipped, arbitrary nonrigid, and phase-recovery settings remain outside its general claims. The paired result uses one operator and source; cross-operator generalization and arbitrary geometric inversion are not claimed.

\bibliography{setlevel2027}

\end{document}


\maketitle

\section{Claim--Evidence Closure}

Table~\ref{tab:sup-closure} maps every formal result to its complete argument, controlled check when claimed, and operative boundary, then summarizes the three bounded applications. The stochastic predictions are tested in constructions that satisfy their assumptions, while the planar-similarity result is closed analytically through an explicit inverse and nonsingular Jacobian. At a glance, moment calibration gives LPIPS $0.0547$ with $1\%$ harm, channel magnitude-response error decreases by $99.5\%$ from $N=1$ to $N=256$, and paired Ours improves Direct U-Net by $1.49$ dB PSNR while reducing LPIPS by $19.7\%$.

\begin{table*}[t]
\centering
\scriptsize
\setlength{\tabcolsep}{4pt}
\begin{tabular}{@{}>{\raggedright\arraybackslash}p{0.18\textwidth}>{\raggedright\arraybackslash}p{0.27\textwidth}>{\raggedright\arraybackslash}p{0.27\textwidth}>{\raggedright\arraybackslash}p{0.22\textwidth}@{}}
\toprule
Main-paper claim & Analytic closure in this supplement & Controlled or application evidence & Boundary retained in the claim \\
\midrule
Local moment identifiability & A full-column-rank submap is locally invertible; Jacobian-kernel directions have zero first-order moment response. & Row-space error decreases from $0.5314$ to $0.0631$, while the injected unit kernel component remains $1.000$. & Local, fixed moments and correction family; no global injectivity follows in general. \\
Two-sample precision & The influence expansion separates $BV_YB^\top/N$ from $BV_XB^\top/M$, with the corresponding joint central limit law. & Exact Gaussian MSE matches $dV(1/N+1/M)$ within $0.58\%$; scalar coverage is $94.84$--$95.28\%$. & Independent observed and reference sets; coverage statement is for the exact Gaussian construction. \\
Scaled amplification & Under covariance weighting, the worst standardized variance is exactly $1/\{N\sigma_{\min}(K_W)^2\}$. & The weak-direction experiment gives slope $-1.999$ with $R^2=1.000$. & Full rank at the audited operating point; small singular values imply high precision cost. \\
Shared planar similarity & Centroid and third-order complex moment equations give a closed-form inverse; the four-real-parameter Jacobian is nonsingular. & Analytic identification is complete without using the nonrigid image experiment as a surrogate validation. & Matched population, orientation-preserving similarity, no clipping, and nonzero ensemble third-order moment. \\
Gaussian mean shift & The exact risk, feasible sample requirement, and known-mean Cram\'er--Rao efficiency follow from the difference of independent sample means. & Empirical CRB ratios range from $0.986$ to $1.008$; set-size, signal, and variance slopes are $-1.017$, $-2.068$, and $1.000$. & Isotropic Gaussian location model for the exact finite-sample statements. \\
\midrule
\multicolumn{4}{@{}l}{\textit{Bounded application outcomes}} \\
\midrule
Color corrected output & Moment fit from a shared filtered set and an unpaired clean reference. & LPIPS $0.0547$, harm $1\%$; LPIPS is $42.7\%$ lower than the filtered input. & Nine held-out within-family filters; 360 queries. \\
Channel magnitude response & Periodogram-ratio estimator for the shared $|H(\omega)|$. & Relative MSE decreases from $0.2340$ to $0.0012$; slope $-0.971$. & Magnitude only; 30 trials per point. \\
Paired geometry & Velocity-field warping plus an appearance residual. & PSNR $39.83$, LPIPS $0.0049$ vs. Direct U-Net $38.34$, $0.0061$. & One operator; 72 in-distribution tests; paired and separate from the unpaired theory. \\
\bottomrule
\end{tabular}
\caption{Formal claim--evidence closure and bounded application summary. Each row preserves the assumptions and information regime under which its conclusion is supported.}
\label{tab:sup-closure}
\end{table*}

\section{Estimator Construction}

Notation and assumptions follow the main paper. At the selected local minimizer $\widehat\theta$, define
\[
\widehat R_Y=D_u\mathcal S\{\bar u_Y(\widehat\theta)\},
\qquad
\widehat R_X=D_u\mathcal S(\bar u_X).
\]
The estimated summary influences and their $p$-dimensional projections are
\begin{align*}
a_i^Y&=\widehat R_Y
\{\psi(C_{\widehat\theta}(Y_i))-\bar u_Y(\widehat\theta)\},
&
z_i^Y&=\widehat J^\top Wa_i^Y,\\
a_j^X&=\widehat R_X\{\psi(X'_j)-\bar u_X\},
&
z_j^X&=\widehat J^\top Wa_j^X.
\end{align*}
Let $\widehat S_Y$ and $\widehat S_X$ be the sample covariances of $\{z_i^Y\}_{i=1}^N$ and $\{z_j^X\}_{j=1}^M$, respectively. With $\widehat A=\widehat J^\top W\widehat J$, the projected sandwich estimator is
\[
\widehat\Sigma_\theta(N,M)=
\widehat A^{-1}
\left(\frac{\widehat S_Y}{N}+\frac{\widehat S_X}{M}\right)
\widehat A^{-1}.
\]
After projection, covariance estimation and inversion involve only $p\times p$ matrices.

For mean and standard-deviation matching, a pooled denominator-$L-1$ correction differs from the population-form standard deviation by a known $1+O(L^{-1})$ factor. It therefore preserves the first-order law used in the main paper.

\newpage
\section{Complete Proofs and Derivations}

\subsection{Local Moment Identifiability}

Full column rank of $J^*=\nabla_\theta g(\theta^*)$ gives a row-selection matrix $T\in\R^{p\times q}$ for which $TJ^*$ is nonsingular. The inverse-function theorem applied to $Tg(\theta)$ makes this selected map locally one-to-one at $\theta^*$. Equality of the full moment vectors implies equality of their selected coordinates and hence local equality of their parameters.

\subsection{Unified Two-Sample Linearization}

Write
\[
\widehat m_{N,M}(\theta)
=\mathcal S\{\bar u_Y(\theta)\}-\mathcal S(\bar u_X).
\]
Under the standing differentiability and local-uniform-convergence conditions, the delta method gives
\begin{equation}
\begin{aligned}
\widehat m_{N,M}(\theta^*)
&=\frac1N\sum_{i=1}^N a_Y(Y_i)\\
&\quad-\frac1M\sum_{j=1}^M a_X(X'_j)\\
&\quad+o_p(N^{-1/2}+M^{-1/2}).
\end{aligned}
\label{eq:sup-moment-expansion}
\end{equation}
The selected interior local minimizer satisfies the first-order condition
\[
0=\nabla_\theta\widehat m_{N,M}(\widehat\theta)^\top
W\widehat m_{N,M}(\widehat\theta).
\]
Expanding this condition at $\theta^*$ and using
$J^{*\top}WJ^*\succ0$ yields
\[
\begin{aligned}
\widehat\theta-\theta^*
&=-B\widehat m_{N,M}(\theta^*)
+o_p(N^{-1/2}+M^{-1/2}),\\
B&=(J^{*\top}WJ^*)^{-1}J^{*\top}W.
\end{aligned}
\]
Substitution of Equation~\eqref{eq:sup-moment-expansion} gives the linearization in Theorem~1 of the main paper. Independence of the two sets gives the leading covariance
\[
B\left(\frac{V_Y}{N}+\frac{V_X}{M}\right)B^\top.
\]
Finally, if $N/(N+M)\to\lambda\in[0,1]$, independent central limit theorems and Slutsky's theorem give
\[
\sqrt{\frac{NM}{N+M}}(\widehat\theta-\theta^*)
\rightsquigarrow
\mathcal N\!\left(
0,
B\{(1-\lambda)V_Y+\lambda V_X\}B^\top
\right).
\]

\subsection{Scaled Asymptotic Error Amplification}

For a planned pair $(N,M)$, define
$\Omega_{N,M}=V_Y+(N/M)V_X$ and fix
$W=\Omega_{N,M}^{-1}$. Then
\[
\frac{V_Y}{N}+\frac{V_X}{M}
=\frac{\Omega_{N,M}}{N},
\]
so the leading covariance of $\widehat\theta-\theta^*$ reduces to
\[
\Sigma_{N,M}=\frac1N(J^{*\top}WJ^*)^{-1}.
\]
Conjugating by $D^{-1}$ and using
$K_W=W^{1/2}J^*D$ gives
\[
\Sigma_D
=D^{-1}\Sigma_{N,M}D^{-1}
=\frac1N(K_W^\top K_W)^{-1}.
\]
The variational characterization of the largest eigenvalue then gives
\[
\sup_{\|u\|_2=1}u^\top\Sigma_Du
=\frac{1}{N\sigma_{\min}(K_W)^2}.
\]

\subsection{Population Identification of a Shared Planar Similarity}

The similarity equivariance identities give
\[
m_\mu^X=a^*m_\mu^Y+b^*,
\qquad
m_\kappa^X=|a^*|^2a^*m_\kappa^Y.
\]
Because $m_\kappa^Y\neq0$, define
$r^*=m_\kappa^X/m_\kappa^Y$. Writing
$a^*=s e^{\mathrm{i}\alpha}$ with $s>0$ gives
$r^*=s^3e^{\mathrm{i}\alpha}$ and therefore
\[
a^*=\frac{r^*}{|r^*|^{2/3}},
\qquad
b^*=m_\mu^X-a^*m_\mu^Y.
\]
The map $a\mapsto|a|^2a$ is a bijection on
$\mathbb C\setminus\{0\}$ and has real Jacobian determinant
$3|a|^4>0$. Multiplication by the nonzero
$m_\kappa^Y$ preserves nonsingularity, while the centroid equation contributes an identity block in $b$. The full four-real-parameter Jacobian is therefore nonsingular.

Replacing population moments by their observed-set or reference-set averages gives
\[
\widehat r=\frac{\widehat m_\kappa^X}{\widehat m_\kappa^Y},
\qquad
\widehat a=\frac{\widehat r}{|\widehat r|^{2/3}},
\qquad
\widehat b=\widehat m_\mu^X-\widehat a\,\widehat m_\mu^Y,
\]
whenever $\widehat m_\kappa^Y\neq0$. Consistency follows from the law of large numbers. The real-imaginary representation and the delta method give separate $N^{-1}$ and $M^{-1}$ first-order covariance contributions.

\subsection{Gaussian Mean Shift}

For independent Gaussian samples,
\[
\overline Y-\overline X'
\sim
\mathcal N\!\left(
s,\,
V\left(\frac1N+\frac1M\right)I_d
\right).
\]
Thus $\widehat s=\overline Y-\overline X'$ is unbiased and
\[
\E\|\widehat s-s\|_2^2
=dV\left(\frac1N+\frac1M\right).
\]
Feasibility requires $dV/(MS^2)<\tau^2$. For such targets, solving the relative-MSE inequality for $N$ gives
\[
N\ge
\left\lceil
\frac{dV}{\tau^2S^2-dV/M}
\right\rceil.
\]
When $\mu$ is known, set $M=\infty$. The Fisher information for $s$ is $NI_d/V$, whose inverse equals the covariance $VI_d/N$ of $\overline Y-\mu$. This estimator therefore attains the Cram\'er--Rao lower bound in the stated known-mean Gaussian model.

\section{Additional Experimental Details}

\subsection{Controlled Validation}

All synthetic studies used fixed random seeds. The scaling, Cram\'er--Rao, and row-kernel studies used 200, 4,000, and 400 repetitions, respectively. Each two-sample precision setting used 20,000 independent Monte Carlo repetitions. Figures~\ref{fig:sup-theory} and~\ref{fig:sup-uncertainty} collect the full visual evidence for the predicted set-size, visibility, reference-noise, and conditioning mechanisms.

\begin{figure*}[t]
\centering
\includegraphics[width=0.98\textwidth]{PaperFigures/theory_validation.png}
\caption{Controlled validation of moment visibility and Gaussian scaling. Sampling reduces estimation error in moment-sensed directions, stronger shared signals require fewer observations, larger individual variance requires more observations, and an exact Jacobian-kernel component remains unresolved. Slopes are log-log fits.}
\label{fig:sup-theory}
\end{figure*}

\paragraph{Moment visibility.}
The linear construction used $y=x+G\theta^*$ with $G\in\R^{6\times p}$ and $C_\theta(y)=y-G\theta$. The statistic was $Q\bar c_\theta$ for $Q\in\R^{k\times6}$, where $\bar c_\theta=N^{-1}\sum_i C_\theta(y_i)$. Clean data followed $\mathcal N(0,I_6)$ and $J=-QG$. Parameter coordinates were standardized before applying the minimum-norm rule.

With $p=4$, $k=2$, and 400 repetitions, mean row-space error decreased from $0.5314$ to $0.0631$. Its slope was $-0.509$ with $R^2=0.997$, while the injected unit kernel component remained $1.000$. A separate $k=5$ draw had rank four and reduced total error from $0.6878$ to $0.0901$.

\paragraph{Two-sample precision and conditioning.}
The reference-size experiment used seed 20270717, $d=8$, $V=1$, and $N=64$. Each $M$ value used 20,000 exact Gaussian sample-mean errors. Empirical parameter MSE differed from $dV(1/N+1/M)$ by at most $0.58\%$. The curve approached $dV/N$ as $M$ increased.

The scalar interval was
\[
\widehat s_1\pm1.96\sqrt{V(1/N+1/M)}.
\]
Empirical coverage ranged from $94.84\%$ to $95.28\%$ across the tested $N$ and $M$. The weak-coordinate experiment varied $\sigma_{\min}(K_W)$ from 1 to 0.05 with $N=256$. Its log-log slope was $-1.999$ with $R^2=1.000$.

\begin{figure*}[t]
\centering
\includegraphics[width=0.98\textwidth]{PaperFigures/uncertainty_validation.png}
\caption{Controlled validation of two-sample precision and scaled conditioning. Empirical MSE tracks the separate observed/reference covariance law and approaches the irreducible observed-set floor; exact Gaussian intervals retain nominal coverage; weak-direction MSE follows the predicted $\sigma_{\min}^{-2}$ amplification. Each setting uses 20,000 draws.}
\label{fig:sup-uncertainty}
\end{figure*}

\paragraph{Known-mean Gaussian efficiency.}
A $d=16$ Gaussian location simulation used a known clean mean and 4,000 repetitions. Table~\ref{tab:sup-crb} compares total parameter MSE with $dV/N$.

\begin{table}[t]
\centering
\small
\begin{tabular}{@{}rrrr@{}}
\toprule
$N$ & Empirical MSE & $dV/N$ & Ratio \\
\midrule
4   & 3.9687 & 4.0000 & 0.992 \\
8   & 2.0028 & 2.0000 & 1.001 \\
16  & 0.9862 & 1.0000 & 0.986 \\
32  & 0.4969 & 0.5000 & 0.994 \\
64  & 0.2512 & 0.2500 & 1.005 \\
128 & 0.1260 & 0.1250 & 1.008 \\
\bottomrule
\end{tabular}
\caption{Empirical Gaussian error matches the known-mean Cram\'er--Rao bound across tested set sizes. The simulation uses $d=16$ and $V=1$.}
\label{tab:sup-crb}
\end{table}

\paragraph{Gaussian scaling.}
The set-size, signal-strength, and variance studies used $d=8$, NumPy seed 0, 200 repetitions, and $M=40{,}000$. Squared mean Euclidean error decreased from $1.9011$ to $0.0280$ across the tested $N$ values, with log-log slope $-1.017$. The first grid point below mean relative error $0.20$ defined $N_{\mathrm{eff}}$. Its slopes were $-2.068$ against $S$ and $1.000$ against $V$.

Together, these controlled studies close all five stochastic predictions used by the paper: $N^{-1}$ error scaling in sensed directions, $S^{-2}$ and $V^{+1}$ sample requirements, separate $N^{-1}$/$M^{-1}$ precision contributions, $\sigma_{\min}^{-2}$ weak-direction amplification, and persistence of exact kernel components. The measured exponents and exact-Gaussian checks agree with the corresponding analytic quantities without extending those finite-sample claims to the nonlinear operator studies.

\subsection{Color Corrected-Output Calibration}

The color instance resized images to $256\times256$ and used
\[
C_\theta=G_\eta\circ S_s\circ\Gamma_\gamma\circ R_\beta\circ A_\alpha.
\]
The affine block used 12 coordinates. The monotone piecewise-linear curve used 16 positive increments per channel, while gamma, saturation, and gain-bias used 3, 1, and 6 coordinates. Thus $p=12+48+3+1+6=70$. The same stages acted on every image and clamped outputs to $[0,1]$.

Means and standard deviations were pooled over batch and spatial positions with variance denominator $L-1$. RGB channels contributed six coordinates. The 256-channel VGG-16 \texttt{features[:16]} output contributed 512, giving $q=518$. VGG-16 used ImageNet-1K V1 weights in evaluation mode and standard normalization. RGB summaries were multiplied by five before squared-error matching with $W=I$. Parameters were initialized at the identity and optimized for 300 Adam steps with learning rate $0.05$. Reference summaries used 200 clean faces.

The source images were unretouched \texttt{bare.jpg} files extracted from FFHQ \cite{karras2019stylegan}. The data-generation script resized them to $256\times256$ and applied nonlearned Instagram-style filter formulas implemented by \texttt{pilgram}. Identities were sorted before splitting: positions 0--1999 formed the training pool and positions 2000--2299 the held-out test pool, with measured intersection zero. The paired training manifest used 1800 identities and 16 filters; the validation manifest used the remaining 200 training-pool identities and the same 16 filters. The nine test filters---aden, brooklyn, inkwell, lark, maven, moon, rise, slumber, and stinson---were disjoint from those 16. Because the two sets partition all 25 filters in \texttt{pilgram}, ``unseen'' here means held out within one implementation family, not a filter from an independent app or engine.

For each test filter, CSV rows \texttt{[:N]} supplied filtered observed images and rows \texttt{[N:N+40]} supplied queries. Clean counterparts of the observed images were never used in fitting; query clean images served only as evaluation ground truth. Thus the main $N=30$ comparison used rows 0--29 for calibration and 30--69 for 40 queries per filter, or 360 queries total. The set-size sweep used the same subsequent-40 rule for each $N$. The clean reference consisted of 200 unpaired FFHQ \texttt{bare.jpg} entries at half-open sorted indices \texttt{[15000:15200]}, outside the identity range used for training, validation, calibration, or queries. LPIPS decreased from $0.1788$ at $N=1$ to $0.1020$, $0.0610$, and $0.0578$ at $N=3,10,30$. At $N=30$, replacing the shared-filter calibration set with a set assembled from other filters increased LPIPS from $0.0578$ to $0.1150$.

The fixed supervised row used the seed-1 ARColorOpV2 best checkpoint, not NAFNet. It was trained on paired data from the 16 seen filters and selected at epoch 39 with validation LPIPS 0.0330. The separately available NAFNet checkpoint was not used for this row.

A query counted as harm when its corrected LPIPS exceeded that of its filtered input. Table~\ref{tab:sup-operators} records the information-regime comparison and the channel endpoints.

\begin{table}[t]
\centering
\small
\begin{tabular}{@{}llrr@{}}
\toprule
Instance & Information / method & Metric $\downarrow$ & Harm \\
\midrule
Color & none / no operation & 0.0954 & 0\% \\
 & paired training / ARColorOpV2 & 0.1023 & 70\% \\
 & unpaired set / Reinhard & 0.0761 & 17\% \\
 & unpaired set / moment, $N=30$ & 0.0547 & 1\% \\
\midrule
Channel & $N=1$ & 0.2340 & -- \\
 & $N=256$ & 0.0012 & -- \\
\bottomrule
\end{tabular}
\caption{Color rows report LPIPS and harm frequency; channel rows report relative magnitude-response MSE. The rows use different information regimes and are not a common-training leaderboard.}
\label{tab:sup-operators}
\end{table}

\begin{figure*}[t]
\centering
\includegraphics[width=0.98\textwidth]{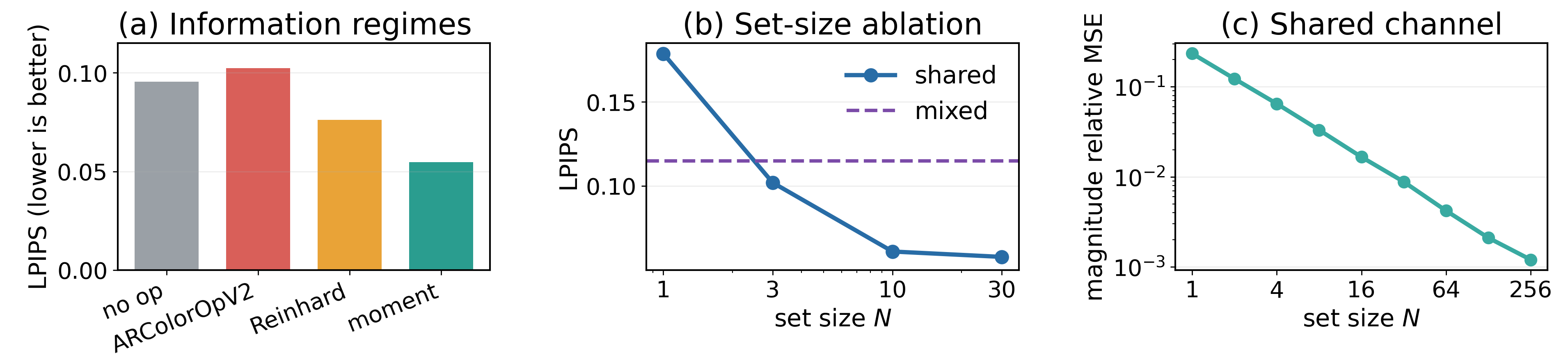}
\caption{Bounded operator results. (a) At $N=30$, moment calibration attains the lowest LPIPS among the reported information regimes. (b) Shared-filter calibration improves steadily with set size, while replacing it with a mixed-filter set at $N=30$ degrades LPIPS from $0.0578$ to $0.1150$, confirming that the set must share the corruption being calibrated. (c) Channel magnitude-response error decreases nearly as $N^{-1}$.}
\label{fig:sup-operators}
\end{figure*}

\subsection{Paired Geometric De-Beautification}

The image-level geometry application used approximately 480 paired $256\times256$ FFHQ faces processed by one commercial face-reshape operator. Sorted pairs were split 85\%/15\% into training and test sets, leaving 72 test images. The proposed width-48 U-Net has two output heads. One predicts a stationary velocity field, integrated with six scaling-and-squaring steps before differentiable warping; the other predicts an appearance residual added to the warped input. It was trained for 5,000 Adam updates with batch size 16 and learning rate $2\times10^{-4}$. Its loss was LPIPS plus $0.7$ times pixel L1, $0.2$ times identity loss, $0.01$ times mean squared velocity, $2.0$ times the negative-Jacobian penalty, and $0.001$ times mean squared residual. The width-matched Direct U-Net used the same pairs and split, predicted an input residual directly, and was trained for 4,000 updates with LPIPS plus $0.7$ times pixel L1. The paper-facing proposed-method label is \emph{Ours}.

On the in-distribution test set, Ours achieved PSNR $39.83$, SSIM $0.972$, LPIPS $0.0049$, and identity similarity $0.997$. Direct U-Net achieved $38.34$, $0.972$, $0.0061$, and $0.995$, respectively. Figure~\ref{fig:sup-debeautification} shows all four retained qualitative examples and their per-image PSNR/LPIPS values. This experiment uses paired supervision and a per-image nonrigid model. It therefore supports only the stated de-beautification application and is not empirical validation of the unpaired moment estimator or Proposition~2.

\subsection{Channel Magnitude-Response Calibration}

The channel instance drew length-256 AR(1) signals and passed them through one fixed normalized eight-tap LTI channel. The reference estimate used 4,000 clean signals. On evaluated frequencies, the estimator was
\[
\widehat{|H(\omega)|}
=\left[
\max\left\{
\frac{\widehat P_Y(\omega)}
{\widehat P_X(\omega)+10^{-9}},
10^{-8}
\right\}
\right]^{1/2}.
\]
Same-mode convolution and AR initialization introduced finite-length boundary effects. The instance estimates channel magnitude while leaving phase unresolved.

Averaged over 30 trials, relative magnitude-response MSE was $0.2340$, $0.1226$, $0.0648$, and $0.0331$ for $N=1,2,4,8$. It was $0.0166$, $0.0088$, $0.0042$, $0.0021$, and $0.0012$ for $N=16,32,64,128,256$. For $N\ge2$, the log-log slope was $-0.971$ with $R^2=1.000$.

\section{End-to-End Interpretation and Reproducibility}

\subsection{Closed Evidence Chain}

The prescribed correction and fixed moments define $g(\theta)$; Jacobian rank determines local visibility, the projected influence expansion separates the two sampling sources, and $\sigma_{\min}(K_W)$ quantifies the weakest estimable direction. The controlled studies recover these predicted laws. The shared-set color ablation directly tests the central premise, the channel study shows stable magnitude recovery over a 256-fold set-size range, and the separate paired geometry model improves every aggregate metric except tied SSIM. Thus the full chain closes at the level supported by each experiment without conflating unpaired calibration with paired nonrigid restoration.

\subsection{Auditable Protocol Record}

The row-kernel, Gaussian-scaling, and efficiency studies use 400, 200, and 4,000 repetitions; every two-sample grid point uses 20,000 draws. Seeds are 20270717 for reference size and 0 for Gaussian scaling. For color, training, test, and reference identities are disjoint; observed clean counterparts never enter fitting, and query clean images are ground truth only. The nine evaluation filters are disjoint from the 16 learned-baseline filters. The fixed learned row is ARColorOpV2, not NAFNet. Geometry and channel splits, runs, hyperparameters, estimands, and boundaries are stated in their respective subsections.

\bibliography{setlevel2027}

\clearpage
\begin{figure*}[p]
\centering
\includegraphics[width=0.96\textwidth]{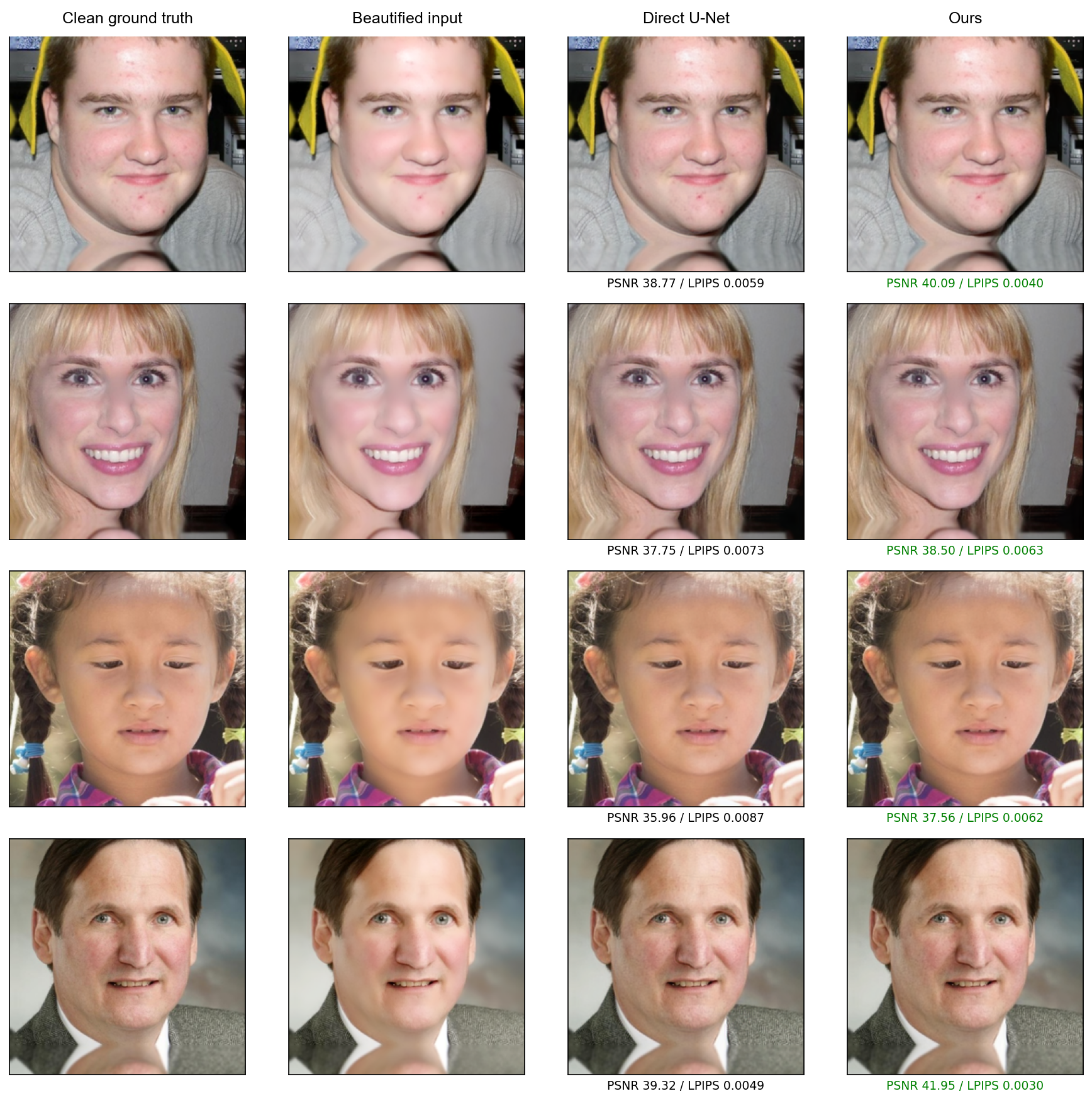}
\caption{Complete qualitative comparison for paired geometric de-beautification. Columns show clean ground truth, beautified input, the task-specific Direct U-Net, and Ours. The printed PSNR/LPIPS values favor Ours in all four displayed examples; over all 72 test images, Ours reaches PSNR $39.83$ and LPIPS $0.0049$ versus $38.34$ and $0.0061$ for Direct U-Net.}
\label{fig:sup-debeautification}
\end{figure*}
\clearpage